\newcommand{\ds}{\displaystyle}
\def\R{{\Bbb R}}
\def\Om{\Omega}
\def\om{\omega}
\def\p{\partial}
\def\q{\quad}
\def\na{\nabla}
\def\k{{\mathbf k}}
\def\h{{\mathbf h}}
\def\n{\mathbf{n}}
\def\u{\mathbf{u}}
\def\v{\mathbf{v}}
\def\x{\mathbf{x}}
\def\y{\mathbf{y}}
\def\z{\mathbf{z}}
\def\F{{\mathbf F}}
\def\mE{{\mathcal E}}
\def\eref#1{(\ref{#1})}
\def\g{{\mathbf g}}
\def\bq{{\mathbf q}}
\def\btau{\boldsymbol{\tau}}
\def\bGamma{\boldsymbol{\Gamma}}
\def\zt{\z^*}
\title{Mathematical modeling of mechanical vibration assisted
conductivity imaging\thanks{\footnotesize Ammari was supported by
the ERC Advanced Grant Project MULTIMOD--267184. Eunjung Lee was
supported by Basic Science Research Program through the NSF of Korea
2013R1A1A1004836. Kwon and Seo were supported by the National
Research Foundation of Korea (NRF) grant funded by the Korean
government (MEST) (No. 2011-0028868, 2012R1A2A1A03670512). }}
\author{Habib Ammari\thanks{Department of Mathematics and Applications,
Ecole Normale Sup\'erieure, 45 Rue d'Ulm, 75005 Paris, France ({\tt
habib.ammari@ens.fr}).}\and  Eunjung Lee\thanks{Department of
Computational Science and Engineering, Yonsei University, 50
Yonsei-Ro, Seodaemun-Gu, Seoul 120-749, Republic of Korea ({\tt
eunjunglee@yonsei.ac.kr}, {\tt 3c273-85@hanmail.net}, {\tt
seoj@yonsei.ac.kr}). Eunjung Lee is the corresponding author.} \and
Hyeuknam Kwon\footnotemark[3] \and Jin~Keun~Seo\footnotemark[3] \and
Eung~Je~Woo\thanks{\footnotesize Department of Biomedical
Engineering, College of Electronics and Information, Kyung Hee
University,  1732 Deogyeong-daero, Giheung-gu Yongin-si, Gyeonggi-do
446-701, Republic of Korea ({\tt ejwoo@khu.ac.kr}).}}
\begin{document}

\maketitle

\begin{abstract}
This paper aims at mathematically modeling a new multi-physics conductivity
imaging system incorporating mechanical vibrations simultaneously applied to an imaging object together with current injections. We perturb the internal conductivity distribution by applying time-harmonic mechanical vibrations on the boundary. This enhances the effects of any conductivity discontinuity on the induced internal current density distribution. Unlike other conductivity contrast enhancing frameworks, it does not require a prior knowledge of a reference data. In this paper, we provide a mathematical framework for this novel imaging modality. As an application of the vibration-assisted impedance imaging framework, we propose a new breast image reconstruction method in electrical impedance tomography (EIT). As its another application, we investigate a conductivity anomaly detection problem and provide an
efficient location search algorithm. We show both analytically and numerically that the applied mechanical vibration increases the data sensitivity to the conductivity contrast and enhances the quality of reconstructed images and anomaly detection results. For numerous applications in impedance imaging, the proposed multi-physics method opens a new difference imaging area called the vibration-difference imaging, which can augment the time-difference and also frequency-difference imaging methods for sensitivity improvements.
\end{abstract}

\begin{keywords}
Electrical impedance tomography, Mechanical vibration, Medical
imaging, Anomaly detection
\end{keywords}

\begin{AMS}
35R30, 35B30
\end{AMS}

\pagestyle{myheadings} \thispagestyle{plain}

\section{Introduction}\label{sec:intro}

Electrical impedance imaging methods typically use a number of
surface electrodes to inject currents and measure induced voltages. There exists numerous electrode configurations in electrical impedance tomography (EIT) for applications to the head, chest, breast, abdomen, and limbs. For the breast imaging, it is often desirable to adopt a planner probe or a plate with an array of electrodes placed over the breast. For this kind of electrode configuration, we may also investigate an electrical admittance imaging method, where exit currents subject to an applied voltage are measured.

Though the time-difference EIT has been finding clinical applications especially for the lungs, there still exist technical difficulties in producing reliable images of tumor and stroke, where the time-difference approach is not feasible. Noting that a reference data obtained beforehand at a certain time in the absence of an anomaly is not available, we need to develop a novel method to image the anomaly if it exists without using a time-difference approach.

Frequency-difference methods have been, therefore, suggested to utilize different frequency-dependent electrical properties of anomalies such as tumor and stroke compared with the background region. Although the frequency-difference methods have their own
advantages, reconstructions of reliable frequency-difference images are hindered from several technical difficulties: the low sensitivity of measured voltage data on a local change of frequency-dependent admittivity, the inhomogeneous frequency-dependent admittivity distribution of the background region, measurement errors especially at high frequencies due to stray capacitances, and so on.

To improve the sensitivity of the frequency-difference EIT for tumor and stroke imaging applications, we propose redesigning the measurement method to include a process of mechanical vibration. By augmenting the primary physical process of electrical conduction, it allows us to perform a vibration-difference imaging in addition to the frequency-difference imaging.

In this paper, we will first describe the mathematical framework of the vibration-difference method in EIT. Emphasizing the sensitivity improvement by the conductivity modulation through a mechanical vibration, we will suggest two measurement settings where this new approach can be adopted. We will carry out rigorous analyses of these new vibration-difference methods and show their performances and feasibilities through numerical experiments.

\section{Mathematical modeling }\label{sec:model}

This section provides  a new mathematical model for a mechanical
vibration assisted conductivity imaging and its theoretical ground. To propose the model, we set $\Omega$ to be a bounded domain with a boundary $\partial\Omega$ of class $\mathcal{C}^2$ in $\mathbb{R}^3$. We assume that the electrical conductivity $\sigma$
of $\Omega$ is of class $\mathcal{C}^2(\overline{\Omega})$.
Moreover, there exist $\underline{\sigma}$ and $\overline{\sigma}$
such that $ 0< \underline{\sigma}<\sigma<\overline{\sigma} <\infty.$
Furthermore, we suppose that $\sigma$ is constant on a neighborhood
of the boundary $\partial \Omega$.

When we inject a current $q\in \mathcal{C}^{1,\alpha}(\partial
\Omega)$ for some $0<\alpha<1$ with its mean-value of zero, $\int_{\partial
\Omega} q = 0$,   the resulting electrical potential $v$ is governed
by the following conductivity equation:
\begin{equation}\label{eq:potential_homo}
\left\{\begin{array}{rcll}
\nabla\cdot(\sigma\nabla v)&=&0&\mbox{in}~\Omega,\\
\sigma\frac{\partial v}{\partial \n}&=&q&\mbox{on}~\partial\Omega,\\
\ds \int_{\partial \Omega} v = 0,
\end{array}\right.
\end{equation}
where $\sigma \partial v/ \partial \n = \n \cdot (\sigma \nabla)$
with $\n$ being the outward unit normal vector at $\partial \Omega$.

To perturb the conductivity distribution $\sigma$, we
attach a mechanical vibrator on the boundary $\p\Om$ and apply a
time-harmonic vibration. We assume that $\Omega$ is composed of a
linearly elastic, isotropic, and incompressible material of density
equal to $1$. We let $\mu$ be the shear modulus of $\Omega$. We
assume that $\mu$ belongs to
$\mathcal{C}^{0,\alpha}(\overline{\Omega})$ and there exist
$\underline{\mu}$ and $\overline{\mu}$ such that  $ 0<
\underline{\mu}<\mu<\overline{\mu} <\infty.$ If $\om$ is the
operating angular frequency, the resulting time-harmonic
elastic displacement is denoted as $\underline{\u}(\x,t)= \rm{Re} \{ e^{i
\omega t} \u(\x) \}$ for $\x \in \Omega$ and $t \in \R^+$, where
$\u$ satisfies the Stokes system \cite{Habib:2011}:
\begin{equation}\label{eq:ham-elasticity}
\left\{\begin{array}{rcll}
\om^2\u+\na\cdot (\mu (\na \u +\na {\u}^T)) -\nabla p &=&0 &\mbox{in}~\Omega, \\
\nabla \cdot \u &=&0 &\mbox{in}~\Omega,\\
\u &=& {\bf g}  & \mbox{on }
\partial\Omega.
\end{array}\right.
\end{equation}
 Here,
$\na {\u}^T$ denotes the transpose of the matrix $\na
{\u}$ and ${\bf g} \in \mathcal{C}^{1,\alpha}(\partial \Omega)$ is such that the compatibility condition
$\int_{\partial \Omega} {\bf g} \cdot \n = 0$ holds.

In the sequel, we assume that $-\omega^2$ is not a Dirichlet eigenvalue of the Stokes system on $\Omega$; see \cite{zuazua}. We also recall that the analytical continuation principle holds true for the Stokes system. In fact, it can be proved using the results in \cite{fabre1, fabre2, nakamura} that if $\u$ is zero in a ball inside $\Omega$, then $\u$ is identically zero everywhere in $\Omega$ provided that $\mu \in \mathcal{C}^{0,1}(\overline \Omega)$.   Moreover, from \cite{chen, giaquinta, yanyan},  $\u \in \mathcal{C}^{1,\alpha}(\overline{\Omega})$ and there exists a positive constant $C$ depending only on $\mu, \omega,$ and $\Omega$ such that
$$
|| \u ||_{\mathcal{C}^{1,\alpha}(\overline{\Omega})} \leq C || {\bf
g}||_{\mathcal{C}^{1,\alpha}({\partial \Omega})}.$$
The displacement $\underline{\u}$ causes the perturbation of the
conductivity distribution, $\sigma^\diamond$, which can be described
as, for a time $t\in\mathbb{R}^+$,
\begin{equation}\label{eq:sigma_ori}
\sigma^\diamond(\x+\underline{\u}(\x,t),t)=\sigma(\x),\quad \forall
\x\in \Omega.
\end{equation}
It induces $\u\cdot\nabla \sigma$, which can be captured by various
electrical impedance imaging techniques. To show this, we let
$\Omega^\diamond=\{\x+ \underline{\u}(\x,t)\,|\, \ \x\in
\Omega,\,\mbox{for a time}~ t \in \R^+ \}$. We can rewrite the relation
(\ref{eq:sigma_ori}) as
\begin{equation} \label{link}
\begin{aligned} \sigma^\diamond =\sigma \circ \left( \mathbb{I} + \underline{\u} \right)^{-1},
\quad  \sigma= \sigma^\diamond  \circ \left( \mathbb{I} +
\underline{\u} \right), \quad \x \in \Omega^\prime, t \in \R^+,
\end{aligned}
\end{equation}
where $\mathbb{I} + \underline{\u}$ is a map such that, for a time
$t\in\mathbb{R}^+$,
$$
\mathbb{I} + \underline{\u}(\cdot,t):\x\mapsto (\x+\u(\x,t),t)
$$
and $\Omega^\prime\subset \subset \left(\Omega\cap
\Omega^\diamond\right)$ is any smooth simply connected domain.

Assuming that $\mu \in \mathcal{C}^{0,\alpha}(\overline \Omega)$,
$\|{\u}\|_{L^\infty(\Omega)}\ll 1$ and $\sigma \in
\mathcal{C}^2(\overline \Omega)$, the perturbed conductivity can be
approximated as
\begin{equation}\label{eq:sigma}
\sigma^\diamond(\x,t)\approx \sigma(\x)- \underline{\u}(\x,t)
\cdot\nabla\sigma(\x), \quad \x \in \Omega',~ t \in \R^+,
\end{equation}
since $\sigma$ is assumed to be constant on a neighborhood of the
boundary $\partial \Omega$.
Let $\underline v^\diamond$ denote the electrical potential of
\eref{eq:potential_homo} with the conductivity distribution $\sigma^\diamond$ in place of $\sigma$. The potential $\underline v^\diamond$
varies with the time-change of $\sigma^\diamond$:
\begin{equation}\label{eq:pert_u}
\nabla\cdot\left((\sigma(\x)-\underline{\u}(\x,t)\cdot\nabla\sigma(\x))\nabla
\underline v^\diamond(\x,t)\right)~\approx~0\q\mbox{for}~\x\in\Om, t
\in \R^+.
\end{equation}
Denoting $\underline {v_1}(\x,t):=\underline v^\diamond(\x,t)-v(\x)$, we have
\begin{eqnarray}\label{approx-v1}
\nabla\cdot (\sigma(\x)\na \underline {v_1}(\x,t))
&=&\nabla\cdot\left( \underline{\u}(\x,t)\cdot\nabla\sigma(\x) ~\na
v(\x) \right) + \nabla\cdot \left(
\underline{\u}(\x,t)\cdot\nabla\sigma(\x) ~\na \underline{v_1}(\x,t)\right)\nonumber  \\
&\approx &\nabla\cdot \left(
\underline{\u}(\x,t)\cdot\nabla\sigma(\x) ~\na v(\x) \right).
\end{eqnarray}
In the last approximation, we dropped $\nabla
\hspace{-0.1cm}\cdot\hspace{-0.1cm}
\left(\underline{\u}(\x,t)\cdot\nabla\sigma(\x) \na
\underline{v_1}(\x,t) \right)$ since both $\underline{\u}$ and
$\underline{v_1}$ are small.

From \eref{eq:potential_homo}, by virtue of the approximation
 \eref{approx-v1}, it follows that $\underline{v_1}(\x,t)$ satisfies
\begin{equation}\label{eqn-for-v1-bar}
\left\{\begin{array}{rcll}
\nabla\cdot(\sigma(\x)\nabla \underline{v_1}(\x,t))&=&
{\rm Re} \{  e^{i\om t}\nabla\cdot(\u(\x)\cdot\nabla \sigma(\x) ~\nabla v(\x)) \} &
\mbox{for}~(\x,t)\in \Omega\times \R^+,\\
\sigma\frac{\partial
\underline{v_1}}{\partial\n}&=&0&\mbox{on}~\partial\Omega \times
\R^+.
\end{array}\right.
\end{equation}
Therefore, we can express $\underline{v_1}$ as
$$
\underline{v_1}(\x,t)~=~{\rm Re} \{ e^{i\omega t} v_1(\x)  \},
$$ where $v_1$ is the solution to the following conductivity equation:
\begin{equation}\label{eqn-for-v1}
\left\{\begin{array}{rcll}\nabla\cdot(\sigma\nabla v_1)&=&
\nabla\cdot((\u\cdot\nabla \sigma)\nabla v)&
\mbox{in}~\Omega,\\
\sigma\frac{\partial v_1}{\partial\n}&=&0&\mbox{on}~\partial\Omega.
\end{array}\right.
\end{equation}

Finally, we arrive at
\begin{equation} \label{approx}
\underline v^\diamond(\x,t) \approx v(\x) + {\rm Re} \{ e^{i \omega
t}{v_1}(\x) \}, \quad \x \in \Omega, t \in \mathbb{R}^+,
\end{equation}
where ${v_1}$ is the solution to (\ref{eqn-for-v1}). Note that the
measured data over time yields the knowledge of $v_1$, which is
(approximately) the difference between $v$ and $\underline v^\diamond$ measured without and with the mechanical vibration, respectively. The equation (\ref{eqn-for-v1}) clearly shows
that $v_1$ carries information of $\u\cdot\nabla\sigma$. The major advantage of the proposed method is then to extract the additional information of $\u\cdot\nabla\sigma$ from the boundary current-voltage relation.

In the following sections, we will deal with two imaging problems. The first one is to visualize the projected image of $\u\cdot\nabla\sigma$ in a breast imaging setting using a planar array of electrodes. The second is for the anomaly identification in a more general EIT system configuration. Based on the approximation (\ref{approx}), we will provide novel reconstruction
methods with rigorous analyses. We will extend the approximation (\ref{approx}) to piecewise constant conductivity distributions.

\section{Vibration-assisted electrical impedance imaging}\label{sec:PIR}

Quantitative measurements and imaging of the electrical and
mechanical tissue properties have been studied to improve the
sensitivity and specificity of the X-ray mammography for early
diagnosis of breast cancer. These methods are motivated from the
experimental findings that electrical and mechanical properties of
the breast tissue change with its pathological state. Malignant
breast tissues have higher electrical conductivity values and are
significantly stiffer than the surrounding normal tissues
\cite{hartov, jossinet, silva}.

For example, for adjunct uses with the X-ray mammography, T-scan or trans-admittance scanner
(TAS) uses a probe with an array of current-sensing electrodes to measure exit currents
induced by an applied voltage \cite{trans, assenheimer, kim, seo}. In trans-admittance mammography
(TAM), the breast is compressed by a pair of parallel plates just
like the X-ray mammography configuration \cite{TAM:2012}. One plate is a solid
conductor connected to a constant voltage source and the other is
equipped with an array of current-sensing electrodes to measure the
exit currents induced by the applied voltage.

To apply the idea of the vibration-assisted impedance imaging technique, we consider the breasting imaging setup in Figure \ref{fig:new_cont}. The imaging domain $\Omega$ is a hexahedral container, and two pairs of driving electrodes ($\mE_{1\pm}$ and
$\mE_{2\pm}$) are attached  on the side of the container  to inject
two linearly independent currents. On the bottom plate denoted by
$\Upsilon$, there are many small voltage-sensing electrodes to measure the distribution of the induced voltages. One may increase the number of measurements by increasing the number of the voltage-sensing electrodes. Though it is an impedance imaging configuration, we may apply this vibration-assisted method to TAS and TAM, which are based on the admittance imaging configuration.

\begin{figure}[ht!]
\begin{center}
\includegraphics[scale=0.4]{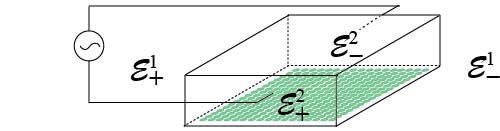}
\caption{Basic setup of a vibration-assisted electrical impedance
imaging system}\label{fig:new_cont}
\end{center}
\end{figure}

When we inject a current with its amplitude $I$ through each pair of the driving electrodes, the
resulting potential $v_{j,0}~(j=1,2)$ satisfies the following mixed boundary value problem:
\begin{equation}\label{mainEq}
\left\{
\begin{array}{ll}
\nabla \cdot(\sigma\nabla v_{j,0})=0 &\mbox{in} \q\Omega,\\
\int_{\mE_{j\pm}} \sigma \frac{\partial v_{j,0}}{\partial \n}=\pm\emph{ I},
~~\int_{\mE_{k\pm}} \sigma \frac{\partial v_{j,0}}{\partial \n}=0&\mbox{for}
\q k\neq j\q\q(k=1,2),\\
\n\times \nabla v_{j,0}=0&\mbox{on}\q \mE_{1\pm}\bigcup\mE_{2\pm},\\
\sigma \frac{\partial v_{j,0}}{\partial \n}=0&\mbox{on}\q
\partial\Omega \backslash(\mE_{1\pm}\bigcup\mE_{2\pm}).
\end{array}
\right.
\end{equation}
In this model, effects of the contact impedance on the measured voltage
values are neglected since we measure the voltages only on the voltage-sensing electrodes.

Attaching a mechanical vibrator on a part of $\p\Om$, we follow the
approach described in the previous section. The time-harmonic displacement
$\u$  inside $\Om$ generates  the conductivity perturbation
$\u\cdot\nabla \sigma$. Following the derivation to obtain $v_1$ in
\eref{eqn-for-v1}, the difference in the potentials between two
cases of without and with the applied mechanical vibration can be described by
$v_{j,1}$ for each $j=1,2$:
\begin{equation}\label{mainEq2}
\left\{
\begin{array}{ll}
\nabla\cdot(\sigma\nabla v_{j,1})=
\nabla\cdot((\u\cdot\nabla \sigma)\nabla v_{j,0})& \mbox{in} \q\Omega,\\
\int_{\mE_{k\pm}} \sigma \frac{\partial v_{j,1}}{\partial \n}=0&\mbox{for}
\q k=1,2,\\
\n\times \nabla v_{j,1}=0&\mbox{on}\q \mE_{1\pm}\bigcup\mE_{2\pm},\\
\sigma \frac{\partial v_{j,1}}{\partial \n}=0&\mbox{on}\q
\partial\Omega \backslash(\mE_{1\pm}\bigcup\mE_{2\pm}).
\end{array}
\right.
\end{equation}

Now, we are ready to provide a novel projective image reconstruction
method for imaging the conductivity perturbation $\u\cdot\nabla
\sigma$. From \eref{mainEq2}, we have
$$
\nabla\cdot(\sigma\nabla v_{j,1})= \nabla\cdot\left(
(\u\cdot\nabla\ln\sigma) (\sigma\nabla v_{j,0})\right)
=\left(\nabla (\u\cdot\nabla\ln\sigma)\right)\cdot (\sigma\nabla
v_{j,0}),~~j=1,2
$$
and, therefore,
\begin{equation}\label{formula}
\left[
  \begin{array}{c}
    \sigma\nabla v_{1,0}^T \\
   \sigma\nabla v_{2,0}^T\\
  \end{array}
\right] [\nabla\left(\u\cdot\nabla\ln \sigma\right)]^T= \left[
  \begin{array}{c}
   \nabla\cdot(\sigma\nabla v_{1,1}) \\
   \nabla\cdot(\sigma\nabla v_{2,1})\\
  \end{array}
\right] \q\q\mbox{in}~\Omega.
\end{equation}
Noting that $\Upsilon$, the bottom plate, is the measuring surface,
the measured quantities are
$$
V_{j,0}=v_{j,0}|_{\Upsilon}\q\mbox{and}\q
V_{j,1}:=v_{j,1}|_{\Upsilon}\q (j=1,2).
$$
Then, the projective image of $\left[\u\cdot\nabla\ln
\sigma\right]_{\mbox{\tiny proj}}$ is obtained by solving the
following two-dimensional Poisson equation
\begin{equation}\label{project-formula}
\nabla_{\mbox{\tiny xy}}^2\left(\left[\u\cdot\nabla\ln
\sigma\right]_{\mbox{\tiny proj}}^T\right)=\nabla_{\mbox{\tiny
xy}}\cdot\left(\left[
  \begin{array}{c}
    \sigma\nabla_{\mbox{\tiny xy}} V_{1,0}^T \\
   \sigma\nabla_{\mbox{\tiny xy}} V_{2,0}^T\\
  \end{array}
\right]^{-1}
\left[
  \begin{array}{c}
   \nabla_{\mbox{\tiny xy}}\cdot(\sigma\nabla_{\mbox{\tiny xy}} V_{1,1}) \\
   \nabla_{\mbox{\tiny xy}}\cdot(\sigma\nabla_{\mbox{\tiny xy}} V_{2,1})\\
  \end{array}
\right]\right) ~~\mbox{on}~\Upsilon
\end{equation}
with the homogeneous Neumann boundary condition on $\p\Upsilon$. We will show from Numerical simulations that the formula \eref{project-formula}
provides a diffused image of the projected distribution of
$\u\cdot\nabla\ln \sigma$ onto the plate $\Upsilon$.

For the breast imaging method to be compatible with the X-ray mammography, we may apply the vibration to the plate on the top to compress the breast. The mechanical vibrator connected to the top plate will create propagating vibration
waves inside the breast. The time-harmonic mechanical waves induce
time-harmonic displacements of the breast tissues to result in
time-harmonic conductivity vibrations. It is crucial to observe that
these time-harmonic conductivity variations perturb the internal
conductivity distribution and enhance the effects of the
conductivity contrast on the internal current density distribution.
Vibrating the breast during the data acquisition, therefore, we can further highlight the conductivity contrast between normal and cancerous tissues.

\section{Vibration-assisted anomaly identification}\label{sec:anomaly}

It is well known that the static EIT imaging has a fundamental drawback due to the technical difficulties in handling forward modeling errors including the boundary geometry, electrode positions, and other systematic artifacts. Hence, in the anomaly identification problem using EIT, a reference current-voltage data (Neumann-to-Dirichlet data) is required  to  cancel out these common errors by a data subtraction method. Since we can repeat the measurements without and with the mechanical vibration, we can extract the effects of the vibration by taking the difference between two sets of the measured data. In this section, we consider a piecewise constant conductivity  distribution and propose an anomaly location search and parameter estimation algorithm based on the vibration-difference approach.

Let $D=\zt+\epsilon B$ be an anomaly compactly
embedded in $\Omega$, where $\zt$ is a gravitational center of $D$,
$B$ is a $\mathcal{C}^2$-bounded domain containing the origin and
$\epsilon$ is a small positive parameter representing the order of
magnitude of the anomaly size. We suppose that $\sigma$ is locally
homogeneous and $\sigma$ changes abruptly across the boundary of the
anomaly $D$ (experimental results show that the conductivity of the
cancerous tissue is $5$ to $8$ times larger than the one of normal tissue).

We also suppose that the shear modulus ${\mu}$ is piecewise constant
such as
\begin{equation*}
\mu =\left\{ \begin{array}{rl}
\mu_-&\mbox{in}~\Omega\backslash\overline{D},\\
\mu_+&\mbox{in}~D. \end{array}\right.
\end{equation*}
Then the displacement field $\u$ satisfies
\begin{equation}\label{eq:ori_ela}
\left\{\begin{array}{rcll} \,\omega^2 \u+ \mu_-\Delta {\u}
-\nabla p &=&0&\mbox{in}~\Omega\backslash
\overline{D},\\
\,\omega^2\u + \mu_+\Delta \u
-\nabla p &=&0&\mbox{in}~ D,\\
\nabla\cdot {\u}&=&0&\mbox{in}~\Omega, \\
{\u}|_- - {\u}|_+&=&0&\mbox{on}~\partial D,\\
\left(\mu_-\frac{\partial{\u}}{\partial \n}-p\n\right)\Big|_- -
\left(\mu_+\frac{\partial{\u}}{\partial
\n}-p\n\right)\Big|_+&=&0&\mbox{on}~\partial
D,\\
{\u}&=&\g&\mbox{on}~\partial\Omega,
\end{array}\right.
\end{equation}
where
$\pm$ denotes the limit from outside and inside of $D$,
respectively.

Let $\btau_1, \btau_2$ be the tangent vectors at $\p D$ such that
$\{ \btau_1, \btau_2,\n\}$ is an orthonormal basis of
$\mathbb{R}^3$. Our first goal is to provide a representation of
$v_1$ in the case of piecewise constant conductivity distributions.
This can be achieved using layer potential techniques. Similar
arguments to those in \cite{proc_ams, beretta1, beretta2} yield, for
$\x \in \Omega \setminus \overline D$,
\begin{equation} \label{v1poc}
v_1(\x) = - \sigma_+ \int_{\partial D}
\u \cdot \n  \bigg[ \left(1-\frac{\sigma_+}{\sigma_-}\right) \left.\frac{\partial
v}{\partial\n}\right|_+\left.\frac{\partial N}{\partial\n}\right|_+
+ \left(1-\frac{\sigma_-}{\sigma_+} \right)\sum_{j=1}^2\frac{\partial
v}{\partial\btau_j}\frac{\partial N}{\partial\btau_j} \bigg]\, ds,
\end{equation}
where $N$ is the Neumann function given by
$$\left\{ \begin{array}{l}
\na \cdot(\sigma\na N)= \delta_{\bf y} \quad \mbox{in } \Omega,\\
\sigma \frac{\partial N}{\partial \n} = \frac{1}{|\partial \Omega|}   \quad \mbox{on } \partial \Omega,\\
\int_{\partial \Omega} N = 0
\end{array}
\right.
$$
with $|\partial \Omega|$ being the magnitude of surface $\partial
\Omega$.

Now, let $w\in H^1(\Om)$ satisfy $\na\cdot(\sigma\na w)=0$ and let $h$ be defined by
\begin{equation}\label{eq:f2}
h= \sigma_+ \bigg[ \left(1-\frac{\sigma_+}{\sigma_-}\right)
\left.\frac{\partial v}{\partial\n}\right|_+\left.\frac{\partial
w}{\partial\n}\right|_+ + \left(1-\frac{\sigma_-}{\sigma_+}
\right)\sum_{j=1}^2\frac{\partial v}{\partial\btau_j}\frac{\partial
w}{\partial\btau_j} \bigg] \quad \mbox{on } \partial D.
\end{equation}
Note that since the restrictions to $D$ of the solutions to the
conductivity equation $\na\cdot(\sigma\na w)=0$ in $\Omega$ are in
$\mathcal{C}^{1,\alpha}(\overline D)$,  $h \in L^2(\partial D)$. In
order to emphasize the dependence of $v_1$ on $\u$, we denote by
$v_1^\u= v_1$.

The next proposition follows from (\ref{v1poc}) by integration by
parts. It gives the relation between measurable boundary data and
interior information of anomaly $D$.

\begin{proposition}\label{prop:data}
For $w\in H^1(\Om)$ satisfying $\na\cdot(\sigma\na w)=0$, we have
\begin{equation}\label{eq:data_form}
\int_{\partial\Omega} v_1^\u~\sigma\frac{\partial w}{\partial\n} ~ds
=\int_{\partial D}~\u\cdot \n \; h ~ds,
\end{equation}
where $h\in L^2(\partial D)$ is defined by (\ref{eq:f2}).

\end{proposition}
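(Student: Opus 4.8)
The plan is to start from the representation formula \eref{v1poc} for $v_1^\u$ on $\Omega \setminus \overline D$ and test it against the co-normal derivative of an arbitrary $\sigma$-harmonic function $w$. Concretely, I would first observe that \eref{v1poc} can be written as $v_1^\u(\x) = -\int_{\partial D} \u\cdot\n \; g_v(\y)\,\frac{\partial N}{\partial \n_\y}\big|_+ \, ds_\y + (\text{tangential term})$, where $N=N(\x,\y)$ is the Neumann function defined in the statement. Then the key point is to multiply by $\sigma(\x)\,\partial w/\partial\n(\x)$ and integrate over $\partial\Omega$, using the defining boundary and PDE properties of the Neumann function to collapse the $\x$-integral. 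Since $\na_\x\cdot(\sigma\na_\x N(\x,\y))=\delta_\y(\x)$ and $\sigma \partial N/\partial\n = 1/|\partial\Omega|$ on $\partial\Omega$, Green's identity applied to the pair $(w, N(\cdot,\y))$ on $\Omega$ gives
\[
\int_{\partial\Omega} \Big( w\, \sigma\frac{\partial N}{\partial\n} - N\,\sigma\frac{\partial w}{\partial\n}\Big)\, ds_\x = w(\y),
\]
and because $\sigma\partial N/\partial\n$ is the constant $1/|\partial\Omega|$ while $\int_{\partial\Omega} w$ need not vanish, one must be slightly careful: the first term equals $\frac{1}{|\partial\Omega|}\int_{\partial\Omega} w$, a constant independent of $\y$. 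This constant will drop out after the subsequent $\y$-integration against $\u\cdot\n$ because of a compatibility/divergence argument (see below), or can be killed at the outset by normalizing $w$; I would handle it explicitly to keep the identity clean.

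Next I would interchange the order of integration (justified since $\u\cdot\n \, g_v \in L^2(\partial D)$, $w\in H^1$, and the Neumann function has the standard integrable singularity), so that
\[
\int_{\partial\Omega} v_1^\u \, \sigma\frac{\partial w}{\partial\n}\, ds = \int_{\partial D} \u\cdot\n\Big[ \sigma_+\Big(1-\tfrac{\sigma_+}{\sigma_-}\Big)\frac{\partial v}{\partial\n}\Big|_+ \Big(-\int_{\partial\Omega} \frac{\partial N}{\partial\n_\y}\Big|_+ \sigma\frac{\partial w}{\partial\n}\, ds_\x\Big) + (\text{tang.})\Big] ds_\y.
\]
The inner $\x$-integral $-\int_{\partial\Omega}\sigma(\x)\frac{\partial w}{\partial\n}(\x)\, \partial_{\n_\y} N(\x,\y)|_+\, ds_\x$ is, by the Green identity above (differentiated in $\n_\y$ and using that the constant term $\frac{1}{|\partial\Omega|}\int_{\partial\Omega}w$ has zero normal derivative in $\y$), simply $\frac{\partial w}{\partial\n_\y}\big|_+$. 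The same computation applied term-by-term to the tangential pieces replaces $\partial_{\btau_j} N|_+$ by $\partial_{\btau_j} w$. Matching the resulting bracket against the definition \eref{eq:f2} of $h$ then yields exactly $\int_{\partial D}\u\cdot\n\, h\, ds$, which is \eref{eq:data_form}.

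The main obstacle I anticipate is the rigorous justification of using the Neumann function as a test function up to the interface $\partial D$ — in particular, the trace $\partial N/\partial\n|_+$ on $\partial D$ transmission-side limits, and the Fubini interchange, require that $N(\cdot,\y)$ restricted near $\partial D$ (where $\y\in\partial D$) has enough regularity; this is where the hypothesis that $\sigma$ is locally homogeneous near $D$ and the elliptic regularity quoted in the excerpt ($w\in\mathcal{C}^{1,\alpha}(\overline D)$) get used, together with the standard mapping properties of the single- and double-layer potentials that underlie \eref{v1poc}. A secondary technical point is bookkeeping of the additive constant coming from $\sigma\partial N/\partial\n = 1/|\partial\Omega|\neq 0$; I would dispose of it either by noting $\int_{\partial D}\u\cdot\n\, ds = \int_D \na\cdot\u = 0$ by incompressibility (so any constant multiple of $\u\cdot\n$ integrates to zero over $\partial D$), or by replacing $w$ with $w - \frac{1}{|\partial\Omega|}\int_{\partial\Omega} w$, which changes neither side of \eref{eq:data_form}. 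Everything else is integration by parts and is routine.
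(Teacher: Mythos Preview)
Your proposal is correct and follows exactly the route the paper has in mind: the paper's proof is the single sentence ``follows from \eref{v1poc} by integration by parts,'' and what you outline---multiply \eref{v1poc} by $\sigma\,\partial w/\partial\n$, integrate over $\partial\Omega$, swap the order of integration, and use the Green identity for the Neumann function to convert $\int_{\partial\Omega} N(\x,\y)\,\sigma\,\partial_\n w\,ds_\x$ into $-w(\y)+\mathrm{const}$---is precisely that integration by parts spelled out in full. Your handling of the additive constant is also fine (and in fact simpler than you initially suggest): since the derivatives $\partial_{\n_\y}$ and $\partial_{\btau_j,\y}$ are applied \emph{before} the $\y$-integration over $\partial D$, the constant $|\partial\Omega|^{-1}\int_{\partial\Omega} w$ is annihilated immediately and the incompressibility argument $\int_{\partial D}\u\cdot\n=0$ is not needed.
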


In the sequel, we set
\begin{equation}\label{eq:data_rep}
\eta(\u) := \int_{\partial\Omega}v_1^\u~\sigma\frac{\partial
w}{\partial\n} ~ds.
\end{equation}
The imaging problem is then to locate the anomaly $D$ and to
reconstruct its size, its conductivity, and its shear modulus from
$\eta(\u)$.


\subsection{Location search method and asymptotic
expansion}\label{subsec:position}

In order to have further analysis regarding $\eta(\u)$ in
\eref{eq:data_rep}, we represent the solution $\u$ of
(\ref{eq:ori_ela}) as follows (see \cite{Habib:2011, Habib:2007} for
a detailed derivation and a rigorous statement)
\begin{equation}\label{eq:approx-u}
\u(\x) = \u_0(\x)+\epsilon\v\left(\frac{\x-\zt}{\epsilon}\right)
+ O(\epsilon^2),
\end{equation}
where $\u_0$ is the background displacement field (in the absence of any anomaly)
\begin{equation}\label{eq:hom_ela}
\left\{\begin{array}{rcll}
\omega^2 \u_0+\mu_-\triangle\u_0 -\nabla p_0 &=& 0 & \mbox{in}~\Omega,\\
\nabla\cdot\u_0 &=& 0 &\mbox{in}~\Omega, \\
\u_0 &=& \g &\mbox{on}~\partial\Omega,
\end{array}\right.
\end{equation}
and $\v$ is the solution of
\begin{equation} \left\{\begin{array}{rcl}
\mu_-\triangle\v-\nabla q &=& 0 \quad\mbox{in}~\Bbb R^3\backslash \overline{D},\\
\mu_+\triangle\v-\nabla q &=& 0 \quad\mbox{in}~D, \\
\nabla\cdot\v &=& 0 \quad\mbox{in}~\Bbb R^3 ,\\
\v|_- - \v|_+ &=&0 \quad\mbox{on}~\partial D, \\
\left(\mu_-\frac{\partial\v}{\partial \n}-q\n\right)\Big|_-
- \left(\mu_+\frac{\partial\v}{\partial \n}-q\n\right)\Big|_+
&=& (\mu_--\mu_+)(\nabla\u_0+\nabla\u_0^T)\n \quad\mbox{on}~\partial D,\\
\v(\x) &\rightarrow& 0 \quad\mbox{as}~|\x|\rightarrow+\infty, \\
q(\x) &\rightarrow& 0 \quad\mbox{as}~|\x|\rightarrow+\infty.
\end{array}\right.
\end{equation}
For explicit representations of $\u_0$
and $\v$, let us introduce the fundamental tensor
$\boldsymbol{\Gamma}=(\Gamma_{jk})_{j,k=1}^3$ and $\F=(F_1,F_2,F_3)$
corresponding to the equation
$$
\left(\Delta
+\frac{\omega^2}{\mu_-}\right)\Gamma_{jk}(\x)-\partial_k
F_j(\x) = \delta_{jk}\delta(\x)\quad\mbox{in}~ \R^3
$$
and $\nabla\cdot\bGamma=\mathbf{0}$, where
\begin{eqnarray}\label{eq:fund_Stokes}
\Gamma_{jk}(\x)&=&-\frac{\delta_{jk}}{4\pi}\frac{e^{i\frac{\omega}{\sqrt{\mu_-}}|\x|}}{|\x|}
-\frac{\mu_-}{4\pi \omega^2}\partial_j\partial_k\frac{e^{i\frac{\omega}{\sqrt{\mu_-}}|\x|}-1}{|\x|}\\
F_j(\x)&=&-\frac{1}{4\pi}\frac{x_j}{|\x|^3}.\nonumber
\end{eqnarray}
Define $\rm O_R=\{\y:|\y|\leq R, \,\, R~\mbox{sufficiently large}\}$
such that $\overline{\Omega}\subset \rm{O}_R$. If
$\g(\x)=\bGamma(\x-\bar\y)\bq$ with direction of the wave $\bq$ for
a point source $\bar\y\in\partial \rm{O}_R$, then we have
$p_0(\x)=\F(\x-\bar\y)\cdot\bq$ and
\begin{eqnarray*}
&\u_0(\x) = \frac{1}{\mu_-}\bGamma(\x-\bar\y)\,\bq,\\
&\v\left(\frac{\x-\zt}{\epsilon}\right)   =
\mathcal{S}_{B}^0\left(-\frac{\mu_- +\mu_+}{2(\mu_- -\mu_+)} {\bf I}
+\left(\mathcal{K}_{B}^0\right)^\ast\right)^{-1}
\left[\frac{\partial \u_0}{\partial
\n}(\z^*)\right]\left(\frac{\x-\zt}{\epsilon}\right),
\end{eqnarray*}
where $\mathcal{S}_{B}^0$ is a single layer potential for the Stokes
system, $\mathcal K_B^0$ is the boundary integral operator and
$(\mathcal{K}_{B}^0)^*$ is the $L^2$-adjoint operator of $\mathcal
K_B^0$ with superscript $0$ standing for the static case $\omega=0$.

Noting that $\u$ is depending on $\bq$ and the point source
$\bar\y$, we can denote $\u$ by $\u_{\bq,\bar\y}$.
%
Define $J:\Omega\rightarrow\Bbb R$ by
\begin{equation}\label{eq:J}
J(\z^s) := \sum_{j=1}^3\int_{\partial \rm O_R}
\k^T\,\bGamma(\z^s-\bar\y)\,\bq_j~\overline{\eta(\u_{\bq_j,\bar\y})}~ds_{\bar\y}
\end{equation}
for three orthonormal vectors $\bq_j~(j=1,2,3)$, $\z^s\in \Omega$
and a constant unit vector $\k$. Here, $\z^s$ is considered as a
searching point in $\Omega$.

The following lemma follows from the Helmholtz-Kirchhoff identity
for $\bGamma$.
\begin{lemma}
The functional $J(\z^s)$ can be estimated by
\begin{equation}\label{eq:411}
J(\z^s) = \k^T\,{{\rm Im}}(\bGamma(\z^s-\zt))\,\h
+\mathcal{O}(\epsilon^3(\omega^2+1+R^{-1})),
\end{equation}
where $\h=\frac{1}{\mu_-}\int_{\partial D} h ~\n  \, ~ds$.
\end{lemma}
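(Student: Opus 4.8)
The plan is to substitute the asymptotic expansion \eref{eq:approx-u} of $\u_{\bq_j,\bar\y}$ into the data functional $\eta$ and then integrate the resulting kernel over the source sphere $\partial{\rm O}_R$ using the Helmholtz--Kirchhoff identity for $\bGamma$. First I would recall from Proposition~\ref{prop:data} and \eref{eq:data_rep} that $\eta(\u) = \int_{\partial D} \u\cdot\n\, h\,ds$, which is \emph{linear} in $\u$; hence inserting $\u_{\bq_j,\bar\y}(\x) = \frac{1}{\mu_-}\bGamma(\x-\bar\y)\bq_j + \epsilon\v\big(\tfrac{\x-\zt}{\epsilon}\big) + O(\epsilon^2)$ splits $\eta(\u_{\bq_j,\bar\y})$ into a leading term $\frac{1}{\mu_-}\int_{\partial D}(\bGamma(\x-\bar\y)\bq_j)\cdot\n\, h\,ds$ plus an $O(\epsilon)$-correction coming from $\v$ and an $O(\epsilon^2)$ remainder. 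Since $\int_{\partial D} h\,\n\,ds = \mu_-\h$ by definition, and since $\bGamma$ is slowly varying over $D = \zt+\epsilon B$ (so $\bGamma(\x-\bar\y) = \bGamma(\zt-\bar\y) + O(\epsilon)$ for $\x\in\partial D$), the leading term is $\bq_j^T\,\bGamma(\zt-\bar\y)\,\h + O(\epsilon^3)$ once one checks the sizes: $h$ itself is $O(1)$ on $\partial D$ with $|\partial D|=O(\epsilon^2)$, the $\v$-term carries an extra $\epsilon$, and the correction from Taylor-expanding $\bGamma$ carries another $\epsilon$, so all error contributions to $\eta$ are $O(\epsilon^3)$ times factors of $\omega^2+1$.

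Next I would plug this into the definition \eref{eq:J} of $J(\z^s)$. Using $\overline{\eta(\u_{\bq_j,\bar\y})} = \overline{\bq_j^T\bGamma(\zt-\bar\y)\h} + O(\epsilon^3)$ and summing over the orthonormal frame $\{\bq_j\}$, the sum $\sum_{j=1}^3 \big(\k^T\bGamma(\z^s-\bar\y)\bq_j\big)\big(\overline{\bq_j^T\bGamma(\zt-\bar\y)\h}\big)$ collapses, by completeness of $\{\bq_j\}$, to $\k^T\,\bGamma(\z^s-\bar\y)\,\overline{\bGamma(\zt-\bar\y)}\,\h$ (interpreting the middle product as the matrix $\bGamma(\z^s-\bar\y)\,\overline{\bGamma(\zt-\bar\y)}^{\,T}$, using that $\bGamma$ is symmetric). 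Therefore
$$
J(\z^s) = \k^T\left(\int_{\partial{\rm O}_R}\bGamma(\z^s-\bar\y)\,\overline{\bGamma(\zt-\bar\y)}\,ds_{\bar\y}\right)\h + O(\epsilon^3(\omega^2+1+R^{-1})).
$$
The crux is then the Helmholtz--Kirchhoff identity for the Stokes fundamental tensor: for $R$ large,
$$
\int_{\partial{\rm O}_R}\bGamma(\z^s-\bar\y)\,\overline{\bGamma(\zt-\bar\y)}\,ds_{\bar\y} \;=\; \frac{1}{k}\,{\rm Im}\,\bGamma(\z^s-\zt) + O(R^{-1}),
$$
where $k=\omega/\sqrt{\mu_-}$ (up to the normalizing constant already absorbed in the statement); this is the far-field reciprocity relation that makes the backpropagation formula $J$ peak at $\z^s=\zt$. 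Combining the two displays gives \eref{eq:411}.

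The main obstacle is establishing the Helmholtz--Kirchhoff identity for $\bGamma$ with the correct constant and the $O(R^{-1})$ remainder. Because $\bGamma$ in \eref{eq:fund_Stokes} is the Helmholtz kernel $-\frac{\delta_{jk}}{4\pi}\frac{e^{ik|\x|}}{|\x|}$ \emph{plus} a gradient-of-gradient term built from $\frac{e^{ik|\x|}-1}{|\x|}$, the standard scalar Helmholtz identity $\int_{\partial{\rm O}_R}\Phi(\x-\bar\y)\overline{\Phi(\y'-\bar\y)}\,ds_{\bar\y}\to k^{-1}{\rm Im}\,\Phi(\x-\y')$ handles only the first piece directly; one must check that the contribution of the pressure-type correction term either reproduces the missing ${\rm Im}\,\bGamma$ pieces or is absorbed into the $O(R^{-1})$ error, which requires a careful stationary-phase / radiation-condition estimate on the sphere of radius $R$ and use of $\nabla\cdot\bGamma=\mathbf 0$. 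A secondary (but routine) point is justifying uniformity of all $O(\epsilon^2)$ and $O(\epsilon)$ estimates in the expansion of $\u_{\bq_j,\bar\y}$ as $\bar\y$ ranges over $\partial{\rm O}_R$ with $\overline\Omega\subset{\rm O}_R$ fixed, and bookkeeping the powers of $\omega^2+1$ that enter through the elasticity estimates; these follow from the a priori bound $\|\u\|_{\mathcal{C}^{1,\alpha}(\overline\Omega)}\le C\|\g\|_{\mathcal{C}^{1,\alpha}(\partial\Omega)}$ recalled in Section~\ref{sec:model} together with the derivation of \eref{eq:approx-u} in \cite{Habib:2011,Habib:2007}.
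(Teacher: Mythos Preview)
Your proposal is correct and follows essentially the same route as the paper: expand $\u_{\bq_j,\bar\y}$ via \eref{eq:approx-u}, insert into the linear functional $\eta$, replace $\bGamma(\x-\bar\y)$ by $\bGamma(\zt-\bar\y)$ on $\partial D$, sum over the orthonormal frame $\{\bq_j\}$ to collapse the kernel to $\k^T\bGamma(\z^s-\bar\y)\,\overline{\bGamma(\zt-\bar\y)}\,\h$, and then integrate over $\partial{\rm O}_R$.

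The only substantive difference is how the Helmholtz--Kirchhoff step is executed. You flag it as the main obstacle and propose handling the scalar Helmholtz part of $\bGamma$ by the standard identity while controlling the gradient-of-gradient (pressure) correction separately via stationary phase and the divergence-free condition. The paper works with the full tensor $\bGamma$ at once: it first invokes the Sommerfeld radiation condition to replace
\[
\int_{\partial{\rm O}_R}\bGamma(\z^s-\y)\,\overline{\bGamma(\zt-\y)}\,ds_\y
\]
by the antisymmetric combination
\[
\int_{\partial{\rm O}_R}\Big(\tfrac{\partial}{\partial\n}\bGamma(\z^s-\y)\,\overline{\bGamma(\zt-\y)}-\bGamma(\z^s-\y)\,\tfrac{\partial}{\partial\n}\overline{\bGamma(\zt-\y)}\Big)\,ds_\y + O(R^{-1}),
\]
and then evaluates this by Green's identity, using that each column of $\bGamma$ is a fundamental solution of the Stokes system, to obtain $2i\,{\rm Im}\,\bGamma(\z^s-\zt)$ directly. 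This bypasses any need to split off the pressure-type piece of \eref{eq:fund_Stokes} or to run a separate far-field asymptotic on it; the radiation behavior and the fundamental-solution property of the \emph{full} tensor are what make the Green-identity argument close. Your decomposition would also work, but the paper's route is shorter and avoids the bookkeeping you anticipate.
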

\begin{proof}
Using \eref{eq:approx-u} and the above representations, the
functional $\eta(\u)$ can be written as
\begin{eqnarray*}
\eta(\u) &=& \int_{\partial D}~\u(\x)\cdot\n \;  h(\x) ~ds_\x\\
&=&\int_{\partial D}\u_0(\x)\cdot\n \;  h(\x)
~ds_\x +\mathcal{O}\left(\frac{\epsilon^3(\omega^2+1)}{R^2}\right).
\end{eqnarray*}
Then we can write $J(\z^s)$ as
\begin{eqnarray*}
J(\z^s)&=& \sum_{j=1}^3\int_{\partial \rm O_R}~\k^T\bGamma(\z^s-\y)
\bq_j~\overline{\eta(\u_{\bq_j,\y})}~ds_\y \\
&=& \sum_{j=1}^3\int_{\partial \rm O_R}~\k^T\bGamma(\z^s-\y)
\bq_j~\overline{\int_{\partial D}~\u_{\bq_j,\y}\cdot\n\; h
~ds}~ds_\y \\
&=&\frac{1}{\mu_-}\int_{\partial D}\k^T \left[\int_{\partial \rm
O_R} \bGamma(\z^s-\y)\overline{\bGamma(\zt-\y)} \,ds_\y\right] \n \;
h(\x)\,ds_\x +\mathcal{O}(\epsilon^3(\omega^2+1)).
\end{eqnarray*}
From the Sommerfeld radiation condition \cite{Habib:2013:Book}, we
have
\begin{eqnarray*}
&&\hspace{-1cm}\int_{\partial \rm O_R}
\bGamma(\z^s-\y)\overline{\bGamma(\zt-\y)}
\,ds_\y \\
&&\hspace{-1cm}= \int_{\partial \rm O_R}
\left(\frac{\partial}{\partial\n}\bGamma(\z^s-\y)
\overline{\bGamma(\zt-\y)}-\bGamma(\z^s-\y)\frac{\partial}{\partial
\n}\overline{\bGamma(\zt-\y)}\right)
ds_\y+\mathcal{O}\left(\frac{1}{R}\right).
\end{eqnarray*}
Since $|\z^s|,~|\zt|<R$, the property of fundamental solution and
Green's identity imply
\begin{equation*}
\int_{\partial \rm O_R}
\left(\frac{\partial}{\partial\n}\bGamma(\z^s-\y)
\overline{\bGamma(\zt-\y)}-\bGamma(\z^s-\y) \frac{\partial}{\partial
\n}\overline{\bGamma(\zt-\y)}\right) ds_\y = 2i\,{\rm Im}
(\bGamma(\z^s-\zt)).
\end{equation*}
Hence we have
$$
J(\z^s) =\frac{1}{\mu_-} \k^T{{\rm Im}}(\bGamma(\z^s-\zt))
\int_{\partial D}\, h \; \n \,ds
+\mathcal{O}\left({\epsilon^3}{R^{-1}}\right)+\mathcal{O}(\epsilon^3(\omega^2+1)).
$$
\end{proof}

Since ${\rm Im}(\bGamma(\z^s-\z))$ has a sinc function as a
component, $J$ has its maximum at $\z^s=\zt$. In $J$, the
fundamental tensor of the Stokes problem, $\bGamma$, can be replaced
by a simple exponential function. Using that the following
proposition proposes an approximation of $J(\z^s)$ which is more
practical
for finding the maximum and hence locating the anomaly. 

\begin{proposition}\label{prop:position}
Define $\tilde J$ by
\begin{equation}\label{eq:tJ}
\tilde J (\z) :=\int_{\partial \rm O_R}
e^{i\omega\sqrt{\frac{1}{\mu_-}}|\z-\y|}\overline{\eta(\u_{\bq,\y})}
\,ds_\y.
\end{equation}
Then the point $\z^s\in\Omega$ satisfying $ \z^s=\arg\max_{\z\in
\Omega}\tilde J(\z) $ is the center position of $D$.
\end{proposition}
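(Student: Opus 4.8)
The plan is to re-run the asymptotic computation behind \eref{eq:411}, but with the Stokes tensor $\bGamma$ in the back-propagation kernel replaced by the scalar spherical wave $e^{i\omega\sqrt{1/\mu_-}\,|\z-\bar\y|}$, and to check that this substitution changes the leading term of \eref{eq:411} only by an overall positive scalar factor, with $\bq$ now playing the role of the probe vector $\k$; its modulus then still attains its maximum exactly at $\z=\zt$. Write $\kappa:=\omega\sqrt{1/\mu_-}$. As in the proof of the preceding lemma, I would first use \eref{eq:approx-u} together with $\u_0(\x)=\frac1{\mu_-}\bGamma(\x-\bar\y)\,\bq$ to write, for a point source $\bar\y\in\partial{\rm O}_R$,
$$
\overline{\eta(\u_{\bq,\bar\y})}=\frac1{\mu_-}\int_{\partial D}\overline{\bGamma(\x-\bar\y)\,\bq}\cdot\n\;h(\x)\,ds_\x+\mathcal{O}\!\left(\frac{\epsilon^{3}(\omega^{2}+1)}{R^{2}}\right)
$$
(recall $h$ is real-valued), and then, substituting this into \eref{eq:tJ} and interchanging the order of integration,
$$
\tilde J(\z)=\frac1{\mu_-}\int_{\partial D}\left(\int_{\partial{\rm O}_R}e^{i\kappa|\z-\bar\y|}\;\overline{\bGamma(\x-\bar\y)\,\bq}\;ds_{\bar\y}\right)\!\cdot\n\;h(\x)\,ds_\x+\mathcal{O}\!\big(\epsilon^{3}(\omega^{2}+1)\big).
$$

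The heart of the argument --- and the step I expect to be the main obstacle --- is the far-field evaluation of the inner surface integral, uniformly in $\z\in\Omega$. For $\bar\y\in\partial{\rm O}_R$ with $R$ large and $\x\in\overline D$, \eref{eq:fund_Stokes} gives the far-field form $\bGamma(\x-\bar\y)=-\frac{e^{i\kappa|\x-\bar\y|}}{4\pi|\x-\bar\y|}\big({\bf I}-\widehat\r\otimes\widehat\r\big)+\mathcal{O}(|\x-\bar\y|^{-2})$ with $\widehat\r=(\x-\bar\y)/|\x-\bar\y|=-\widehat{\bar\y}+\mathcal{O}(R^{-1})$, while $|\z-\bar\y|-|\x-\bar\y|=\widehat{\bar\y}\cdot(\x-\z)+\mathcal{O}(R^{-1})$ and $ds_{\bar\y}=R^{2}\,d\widehat{\bar\y}$; hence the inner integral equals
$$
-\frac{R}{4\pi}\int_{S^{2}}e^{-i\kappa\,\widehat{\bar\y}\cdot(\z-\x)}\big(\bq-(\widehat{\bar\y}\cdot\bq)\,\widehat{\bar\y}\big)\,d\widehat{\bar\y}+\mathcal{O}(1).
$$
The spherical integral is then identified via the Funk--Hecke formula (equivalently, the Helmholtz--Kirchhoff identity already invoked in the lemma): starting from $\int_{S^{2}}e^{-i\kappa\,\widehat{\bar\y}\cdot\w}\,d\widehat{\bar\y}=4\pi\,\frac{\sin\kappa|\w|}{\kappa|\w|}$ and differentiating twice in $\w$, one finds it equals $-\frac{(4\pi)^{2}}{\kappa}\,{\rm Im}\,\bGamma(\z-\x)\,\bq$ --- precisely the kernel appearing in \eref{eq:411}. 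The genuine technical burden here is to show that the lower-order $\partial_j\partial_k$ piece of \eref{eq:fund_Stokes} and the $\mathcal{O}(R^{-1})$ corrections to the phase and to the amplitude do not contribute at this order, which is the same Sommerfeld-radiation bookkeeping used in the proof of the lemma.

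Finally, since $D=\zt+\epsilon B$ has diameter $\mathcal{O}(\epsilon)$, Taylor-expanding ${\rm Im}\,\bGamma(\z-\x)$ about $\x=\zt$ on $\partial D$ and using $\h=\frac1{\mu_-}\int_{\partial D}h\,\n\,ds$ yields
$$
\tilde J(\z)=c\,R\;\bq^{T}\,{\rm Im}\,\bGamma(\z-\zt)\,\h+\mathcal{O}\!\big(R\,\epsilon^{3}\big)+\mathcal{O}\!\big(\epsilon^{3}(\omega^{2}+1)\big),\qquad c=\frac{4\pi}{\omega}\sqrt{\mu_-}>0,
$$
which is exactly \eref{eq:411} with the source direction $\bq$ in place of the probe vector $\k$ and the harmless extra factor $cR$. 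As in the discussion following \eref{eq:411}, each Cartesian component of ${\rm Im}\,\bGamma(\z-\zt)$ is, by \eref{eq:fund_Stokes}, a combination of $\frac{\sin(\kappa|\z-\zt|)}{4\pi|\z-\zt|}$ and its second derivatives --- a smooth, radially decaying ($\mathrm{sinc}$-type) kernel whose modulus is maximal at $\z=\zt$ --- so, provided $\bq$ is not orthogonal to $\h$ (the non-degeneracy already implicit in the lemma), $|\tilde J(\z)|$ is maximized at $\z=\zt$ up to the stated asymptotic errors. This identifies $\z^{s}=\arg\max_{\z\in\Omega}\tilde J(\z)$ with the center of $D$ and completes the argument.
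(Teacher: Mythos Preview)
Your proposal is correct and follows the same far-field/spherical-integration idea as the paper, but with a more careful execution. The paper's proof simply observes from \eref{eq:fund_Stokes} that $\bGamma(\z-\bar\y)\approx -\frac{1}{4\pi}\frac{e^{i\kappa|\z-\bar\y|}}{|\z-\bar\y|}\,{\bf I}$ for large $|\z-\bar\y|$ (claiming an $O(|\z-\bar\y|^{-3})$ remainder), infers that $J$ and $\tilde J$ share the same phase, and then computes the phase integral directly via $|\z-\bar\y|\approx|\bar\y|-\z\cdot\bar\y/|\bar\y|$ to land on the sinc kernel $\displaystyle\frac{\sqrt{\mu_-}}{\omega}\frac{\sin(\kappa|\zt-\z|)}{|\zt-\z|}$, peaked at $\z=\zt$. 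You instead keep the transverse projection ${\bf I}-\widehat\r\otimes\widehat\r$ in the far field and identify the full spherical integral with ${\rm Im}\,\bGamma(\z-\zt)\,\bq$ through Funk--Hecke, thereby reproducing \eref{eq:411} with $\bq$ in place of $\k$ and an overall factor $cR$. Your route is tighter---it avoids the paper's questionable $O(|\x|^{-3})$ bound on the $\partial_j\partial_k$ piece and makes the link to the preceding lemma explicit---while the paper's is quicker and more heuristic; both rest on the same asymptotic mechanism.
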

\begin{proof}
From the definition of $\bGamma$ in (\ref{eq:fund_Stokes}), we have
\begin{eqnarray*}
-4\pi
\Gamma_{jk}(\x)&=&\delta_{jk}\frac{e^{i\omega\sqrt{\frac{1}{\mu_-}}|\x|}}{|\x|}
+\frac{\mu_-}{\omega^2}\partial_j\partial_k\frac{e^{i\omega\sqrt{\frac{1}{\mu_-}}|\x|}-1}{|\x|}\\
&=&\delta_{jk}\frac{e^{i\omega\sqrt{\frac{1}{\mu_-}}|\x|}}{|\x|}+O(|\x|^{-3}).
\end{eqnarray*}
If $|\z^s-\y|\rightarrow\infty$, then the following approximation holds
\[
\bGamma(\z^s-\y)\approx
-\frac{1}{4\pi}\frac{e^{i\omega\sqrt{\frac{1}{\mu_-}}|\z^s-\y|}}{|\z^s-\y|} {\bf I}
\]
with ${\bf I}$ the identity matrix. Here, the phase terms of $J$ and
$\tilde J$ are the same for the identity matrix $I$. In fact, the
phase term of $\tilde J$ is
\begin{eqnarray*}
&&\int_{\partial D}\int_{\partial \rm O_R}
e^{i\omega\sqrt{\frac{1}{\mu_-}}|\z-\y|}\overline{e^{i\omega\sqrt{\frac{1}{\mu_-}}|\zt-\y|}}ds_\y
ds_\x \\
&&\quad\approx  \int_{\partial
D}\int_{\partial\Omega}e^{i\omega\sqrt{\frac{1}{\mu_-}}
\left(|\y|-\frac{\z\cdot\y}{|\y|}\right)}{e^{-i\omega\sqrt{\frac{1}{\mu_-}}
\left(|\y|-\frac{\zt\cdot\y}{|\y|}\right)}}ds_\y
ds_\x\\
&&\quad= \int_{\partial D}\int_{\partial
\rm O_R}e^{i\omega\sqrt{\frac{1}{\mu_-}}\left(\frac{(\zt-\z)\cdot\y}{|\y|}\right)}ds_\y
ds_\x= \int_{\partial D}\int_{\partial
\rm O_1}e^{i\omega\sqrt{\frac{1}{\mu_-}}(\zt-\z)\cdot\hat{\y}}ds_{\hat{\y}}
ds_\x\\
&&\quad\approx\int_{\partial
D}\frac{\sqrt{\mu_-}}{\omega}\frac{\sin\left(\frac{\omega|\zt-\z|}{\sqrt{\mu_-}}\right)}{
|\zt-\z|}\,ds_\x
\end{eqnarray*}
in which the first approximation holds because
$|\y|,|\y-\z|,|\y-\zt|\gg 1$ imply that the angles between them are
close to 0. Therefore, it has its maximum at $|\zt-\z|=0$.
\end{proof}

Proposition \ref{prop:position} shows that the conductivity anomaly
can be detected with a resolution of the order of half the elastic
wavelength.

\subsection{Size estimation and  reconstruction of the material parameters}\label{subsec:mu}

In the previous subsection, a formula to find the center position
$\zt$ of $D$ has been proposed. Here, we propose a method to estimate the size $\delta$,
the conductivity $\sigma_+$, and the shear modulus $\mu_+$ of the anomaly $D$.
For computational simplicity, we assume that $D$ is a sphere, the
background conductivity, $\sigma_-$, and shear modulus, $\mu_-$, are
known.

Using a broadband frequency range for elastic vibrations, we can
acquire time-domain data corresponding to $\g(\x,t)=
\check{\bGamma}(\x-\y,t)\bq$ for $\y\in\partial\Omega$. Here,
$\check{\bGamma}$ is the inverse Fourier transform taken in $\omega$
variable of the fundamental solution $\bGamma$ to the Stokes system.
Take $w= v$ in (\ref{eq:data_rep}) and rewrite $\eta$ as a function
of time $t$. It follows that
\begin{equation}\label{eq:data_t}
\eta(t) =\int_{\partial D}\u(\x,t)\cdot\n(\x)\,h(\x) ~ds_\x,
\end{equation}
where
$$
h = \sigma_+ \bigg[
\left(1-\frac{\sigma_+}{\sigma_-}\right) \left(\left.\frac{\partial
v}{\partial\n}\right|_+\right)^2  +
(1-\frac{\sigma_-}{\sigma_+})\sum_{j=1}^2\left(\frac{\partial
v}{\partial\btau_j}\right)^2 \bigg].
$$
Define $t_\y^a$ and $t_\y^b$ by
\begin{equation*}
\begin{array}{lll}
t_\y^a&:=&\mbox{the first $t$ such that }\eta(t)\neq 0\\
&=& \mbox{the first $t$ such that a sphere of center $\y$ and growing radius hits}~\partial D,~\mbox{say }\z_a\\
&=& {|\z_a-\y|}/{\sqrt{\mu_-}}\\
t_\y^b&:=&\mbox{the last $t$ such that }\eta(t)\neq 0\\
&=& \mbox{the last $t$ such that a sphere of center $\y$ and growing radius hits }\partial D,~\mbox{say }\z_b\\
&=&  {|\z_b-\y|}/{\sqrt{\mu_-}}.
\end{array}
\end{equation*}
Then the radius, $\delta$, of $D$ can be estimated by
\begin{equation}\label{eq:rel_r_mu}
2 \delta \approx \sqrt{\mu_+}\left(t_\y^b-t_\y^a\right).
\end{equation}
If we know the size of $D$ then we can extract $\mu_+$ information.
If not, we need further investigation as follows. We propose to minimize over
$\mu_+$ and $\sigma^+$ the following discrepancy functional
$$
\int_{t_{\y^a}}^{t_{\y^b}} \bigg| \eta(t) - \check{\bGamma}(\x-\z^*,t) \bq \cdot \int_{\partial D}
\n \, h \, ds \bigg|^2.
$$
To compute $v$ we use relation (\ref{eq:rel_r_mu}) and the fact that
$D= \z^* + \delta B$ with $B$ being the unit sphere centered at the
origin.

\section{Numerical Simulations}

First, we will present numerical simulation results showing voltage
differences when the mechanical vibration is applied. The second numerical simulation will show reconstructed images using the algorithm proposed in section \ref{sec:PIR}. Then, we will show a
numerical evidence of the position finding formula proposed in subsection \ref{subsec:position}.

\subsection{Simulations of the voltage difference
map}\label{sec:NA_vol}

We present two results of numerical simulations to show the voltage
difference map of $v_1$ before and after the applied mechanical
vibration. We consider a cubic container as shown in Table
\ref{fig:model}. The sensing (measuring) electrodes are placed at
the bottom of the container and the sinusoidal mechanical vibration
is applied through the top surface, which is also the current
driving electrode. In the second numerical test, the mechanical
vibration is applied through the lateral surface. Two anomalies, a
small spherical anomaly and a large cylindrical anomaly, are placed
in the container with different material properties shown in Table
\ref{table:sim_1}.
\begin{figure}[ht!]
\begin{center}
\includegraphics[scale=0.4]{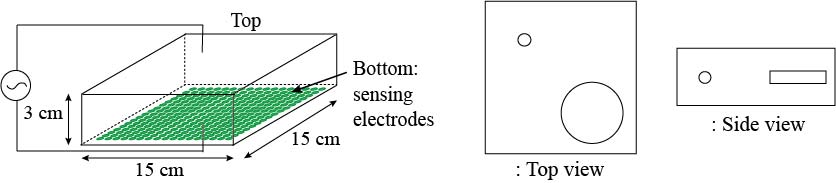}
\caption{Model for numerical simulations.}\label{fig:model}
\end{center}
\end{figure}

\begin{table}[ht!]
\centering
\begin{tabular}{|c|c|c|}
\hline
{} & Background &Anomalies  \\
\hline
 Shear modulus  & 0.266 & 2.99   \\
\hline
\end{tabular}\caption{Shear modulus values used in numerical simulations.}\label{table:sim_1}
\end{table}

Figures \ref{fig:diff_v} presents the measured voltage difference
$v_1$ at the bottom surface. It clearly shows the perturbation of the
conductivity distribution inside $\Omega$ caused by the mechanical vibration.
\begin{figure}[ht!]
\begin{center}
\includegraphics[scale=0.4]{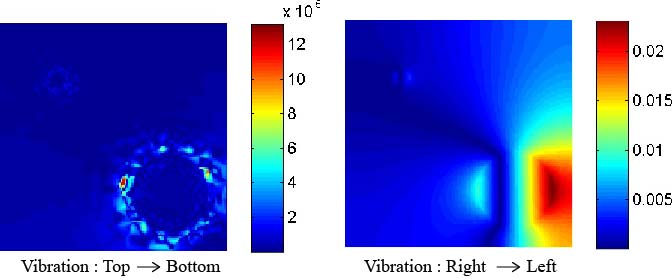}
\caption{Maps of the voltage difference $v_1$ on the bottom surface subject to two different mechanical vibrations.}\label{fig:diff_v}
\end{center}
\end{figure}

\subsection{Image reconstruction} \label{sec:IR}

Based on the analysis in section \ref{sec:model} and the numerical
evidence shown in subsection \ref{sec:NA_vol}, we reconstruct the
perturbed conductivity distribution caused by the mechanical
vibration using the algorithm introduced in section \ref{sec:PIR}.
Let $\Omega$ be a cubical domain such as in Figure \ref{fig:Logan}
with size $15\times 15 \times 3~\mbox{cm}^3$. Since we need two sets
of measured voltage data when the current is injected through two
different directions (the directions of current flow should be
linearly independent each other), respectively, we employ the
positions of driving electrodes as in Figure \ref{fig:new_cont} that
use the whole lateral side as current injecting surface. Sensing
electrodes in $256\times 256$ -array are placed on the bottom
surface. Inside of the container, we placed eight objects as in
Figure \ref{fig:Logan}. The admittivity of the background was
$0.01+i0.01$ and the anomalies had $0.03+i0.03$ with $f=10$ kHz. The
material coefficients for the elasticity equation were set to be the
same as in Table \ref{table:sim_1}. We applied the mechanical
vibration with the frequency of 100 Hz. Figure \ref{fig:PIR1} shows
the reconstructed images of the projected distribution of
$\u\cdot\nabla \ln \sigma$ on the the sensing surface $\Upsilon$ of
the model in Figure \ref{fig:Logan} when the mechanical vibration is
applied from the (a) top, (b) left, and (c) front sides,
respectively.

\begin{figure}[ht!]
\begin{center}
\includegraphics[height=3.5cm,width=13cm]{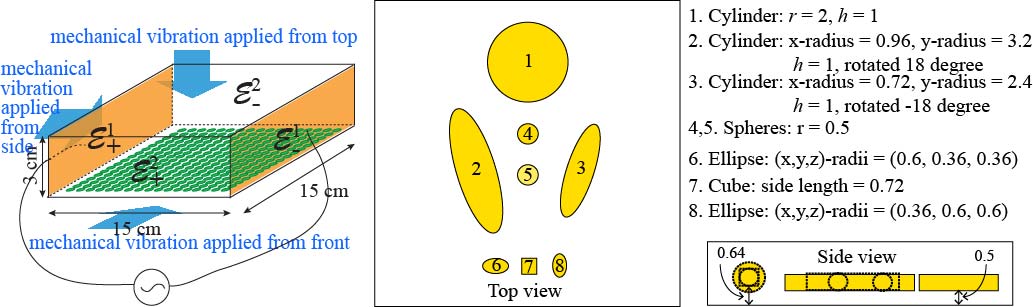}
\caption{Model for image reconstructions: `$r$' is the radius and
`$h$' is the thickness of the object in cm.}\label{fig:Logan}
\end{center}
\end{figure}

\begin{figure}[ht!]
\begin{center}
\includegraphics[scale=0.32]{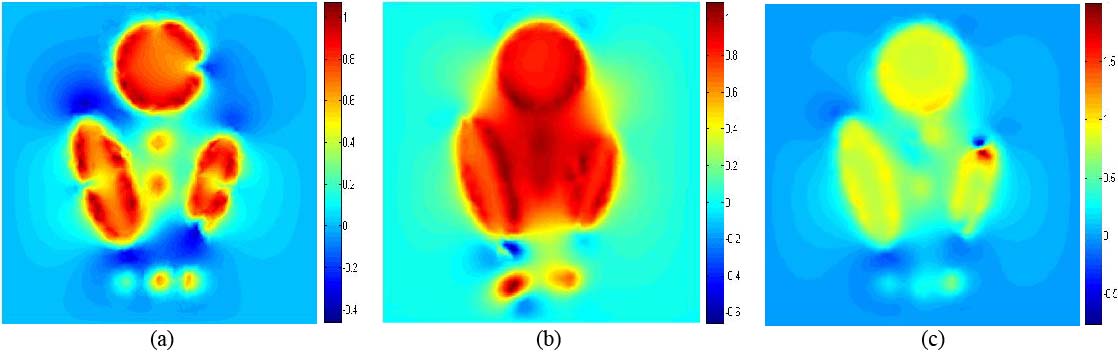}
\caption{Reconstructed images of the model in Figure \ref{fig:Logan}: mechanical vibration
is applied from the (a) top surface, (b) side wall and (c) front wall.}\label{fig:PIR1}
\end{center}
\end{figure}

\subsection{Application to TAM} \label{sec:TAM_app}

In this numerical test, we use the same configuration used in
\cite{PEIT:2013} as shown in Figure \ref{fig:model_TAM}. The
computational domain is enlarged to the size of $19\times 19 \times
3~\mbox{cm}^3$ in order to attach current injecting electrodes while
all the objects are at the same position as in Figure
\ref{fig:Logan} with same material properties. Now current injecting
electrodes are placed on the bottom surface which are still
separated from voltage measuring sensing electrodes with size of
$2\times 2~\mbox{cm}^2$.

\begin{figure}[ht!]
\begin{center}
\includegraphics[scale=0.4]{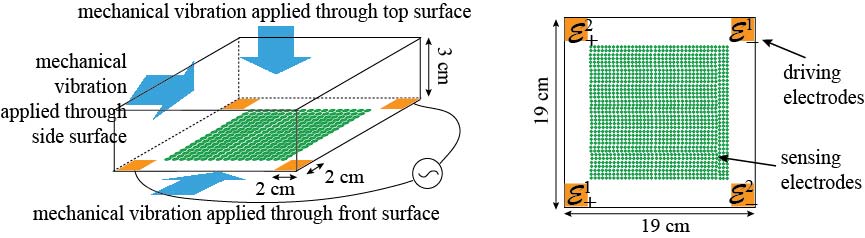}
\caption{Model for numerical simulations based on TAM
model.}\label{fig:model_TAM}
\end{center}
\end{figure}

Also the mechanical vibration with the frequency of 100 Hz is
applied through the top, side, and front surfaces. The reconstructed
images are presented in Figure \ref{fig:recon_TAM}.

\begin{figure}[ht!]
\begin{center}
\includegraphics[scale=0.32]{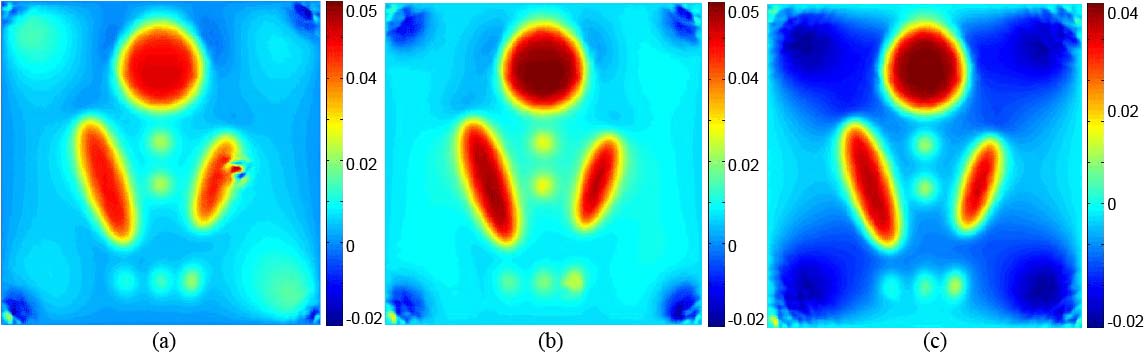}
\caption{Reconstructed images using data collected from the model in
Figure \ref{fig:model_TAM}: mechanical vibration is applied from the
(a) top surface, (b) side wall and (c) front
wall.}\label{fig:recon_TAM}
\end{center}
\end{figure}

\subsection{Anomaly location}\label{sec:position}

In subsection \ref{subsec:position}, the formula (\ref{eq:411}) and
Proposition \ref{prop:position} suggest to seek the maximizer of $J$
or $\tilde J$ to locate the center position of anomaly. To verify
Proposition \ref{prop:position}, we consider a cylindrical domain
$\Omega$ centered at (0,0,1.5) with radius 7.5 cm and height 3 cm.
Let the anomaly $D$ be a sphere with radius 0.25 cm, centered at
$\z^*=(3.75,0,1.5)$. As shown in subsection \ref{subsec:position},
the displacement $\u$ depends on $\bq$ and the point source $\y$.
Here, $\bq$ is set to $(1,0,0),\,(0,1,0)$ and $(0,0,1)$ and the
point source $\y_k$ is chosen for $k=1,\cdots,1940$ which are
uniformly distributed on $\rm O_R$, a sphere centered at (0,0,1.5)
with radius 37.5 cm so that $\overline{\Omega}\subset \rm{O}_R$. The
Figure \ref{fig:point_s} shows the computed discrete version of
$\tilde J (\z)$ for each $\z=(x,y,1.5)$ as follows
\begin{equation*}
\tilde J (\z) :=\int_{\partial \rm O_R}
e^{i\omega\sqrt{\frac{1}{\mu_-}}|\z-\y|}\overline{\eta(\u_{\bq,\y})}
\,ds_\y\approx\sum_{k=1}^{1940}
e^{i\omega\sqrt{\frac{1}{\mu_-}}|\z-\y_k|}\overline{\eta(\u_{\bq,\y_k})},
\end{equation*}
where $\mu_-=1$ and $\omega =200\times \pi$.

\begin{figure}[ht!]
\centering
\includegraphics[scale=0.20]{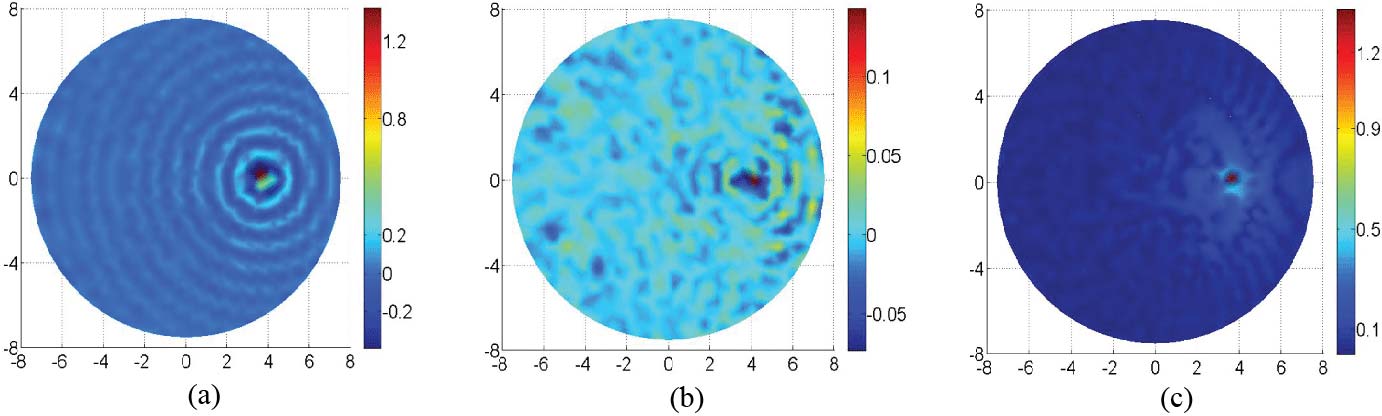}
\caption{$\tilde J(\z)$ when $\z=(x,y,1.5)$: The maximum point is
the center position of $D$ which is (3.75,0,1.5): (a) {\rm
Re($J(\z)$)} (b) {\rm Im($J(\z)$)} (c) $|J(\z)|$}\label{fig:point_s}
\end{figure}

Figure \ref{fig:point_s} shows that the formula proposed in
Proposition \ref{prop:position} finds the center position of anomaly
$D$ under an ideal circumstance such that no noise is added and all
mathematical assumptions are satisfied. An analysis, in the same
spirit of \cite{jugnon}, of the statistical
 stability with respect to medium and measurement noises of the
 localization algorithm will be the subject of a forthcoming work.

\section{Concluding remarks}

In this paper, we proposed a new multi-physics electrical impedance imaging
approach using mechanical vibrations simultaneously applied to an imaging object together with current injections. We provided the mathematical framework for the proposed approach and
presented a few numerical simulation results to illustrate its resolution and stability.

It is worth mentioning that the proposed approach can also be used to measure the elasticity of an internal object with known electrical conductivity values. Using the electrical conductivity image, one can reconstruct the displacement field at the scale of the changes of the conductivity and then, recover the shear modulus. This will be the subject of a forthcoming
publication.

\end{document}